\newcommand{\R}{\mathbb{R}}
\theoremstyle{plain}
\newtheorem{thm}{Theorem}
\newtheorem{prop}{Proposition}
\newtheorem{rmk}{Remark}
\title{A qualitative analysis of a A$\beta$-monomer model with inflammation processes for Alzheimer's disease}
\date{November 2022}
\author[a]{Ionel Ciuperca}
\author[b]{Laurent Pujo-Menjouet}
\author[b]{Leon Matar-Tine}
\author[b]{Nicolas Torres\thanks{Corresponding author. Email : torres@math.univ-lyon1.fr}}
\author[a,c]{Vitaly Volpert}
\affil[a]{Univ Lyon, Université Claude Bernard Lyon 1, CNRS UMR 5208, Institut Camille Jordan, F-69622 Villeurbanne, France.}
\affil[b]{Université Claude Bernard Lyon 1, CNRS UMR 5208, Institut Camille Jordan, Inria DRACULA, F-69603 Villeurbanne, France.}
\affil[c]{Peoples’ Friendship University of Russia, 6 Miklukho-Maklaya St, Moscow, 117198, Russia.}
\begin{document}

\maketitle

\begin{abstract}
We introduce and study a new model for the progression of Alzheimer's disease incorporating the interactions of A$\beta$-monomers, oligomers, microglial cells and interleukins with neurons through different mechanisms such as protein polymerization, inflammation processes and neural stress reactions. In order to understand the complete interactions between these elements, we study a spatially-homogeneous simplified model that allows to determine the effect of key parameters such as degradation rates in the asymptotic behavior of the system and the stability of equilibriums. We observe that inflammation appears to be a crucial factor in the initiation and progression of Alzheimer's disease through a phenomenon of hysteresis, which means that there exists a critical threshold of initial concentration of interleukins that determines if the disease persists or not in the long term. These results give perspectives on possible anti-inflammatory treatments that could be applied to mitigate the progression of Alzheimer's disease. We also present numerical simulations that allow to observe the effect of initial inflammation and concentration of monomers in our model.
\end{abstract}

\textit{Keywords:} Alzheimer's Disease; Persistence; Bifurcation Analysis; Hysteresis; Inflammation.

\section{Introduction}
Understanding the origin and development of Alzheimer's disease (AD) has been a challenging problem for biologists during the past decades. As in many neurodegenerative diseases, AD is known to be associated with the misconformation, aggregation and propagation of
different proteins in the neural system \cite{haass2007soluble,Sakono2010,sengupta2016role,soto2003unfolding}. They form stable oligomers that eventually accumulate in the so called amyloid plaques and this phenomenon is believed to lead to a progressive irreversible neuronal damage. One of these proteins that appears to be relevant in the early stages of development of AD are the A$\beta$-monomers, whose precise mechanisms of aggregation and diffusion are yet to be discovered.

In this context, mathematical models arise as a useful approach to understand the different processes underlying Alzheimer. Several types of models have been considered, including from simple systems of ordinary differential equations to more complex partial differential equations such as transport equations \cite{ciuperca2019alzheimer}, reaction-diffusion models \cite{matthaus2006diffusion,matthaeus2009spread,Bertsch2016,andrade2020modeling} and stochastic control models \cite{hu2022finite}. 

The goal of this article is to understand the complete interactions between A$\beta$-monomers, oligomers, microglial cells and interleukins through a new system of partial differential equations, involving the development of AD in the brain. Neurons produce A$\beta$-monomers, that almost instantaneously start to polymerize into proto-oligomers. In this aggregation process proto-oligomers are able to polymerize or depolymerize and once they reach a critical size they become stable under the form of A$\beta$-oligomers. These latter are assumed to be totally stable in the sense that neither polymerization nor depolymerization  is possible for A$\beta$-oligomers equilibrium \cite{murphy2000probing,nag2011nature}.

Besides the mechanism of polymerization, oligomers interact with microglial cells, considered as auxiliary cells in the nervous systems regulating brain development. They induce an inflammation reaction through a chemical cascade in microglial cells, releasing interleukins \cite{forloni2018alzheimer,kinney2018inflammation}. These interleukins then activate an increase of A$\beta$-monomers production from the neurons. However, if the concentration of A$\beta$-oligomers is high enough, then a reaction of stress called UPR (Unfolded protein response) \cite{soto2003unfolding} is triggered which leads to a decrease of A$\beta$-monomers production, while the rest of oligomers diffuses in the neuronal environment. In this context, two opposed mechanisms of stimulation and inhibition will determine the persistence of AD or not.

Moreover, oligomers are brought and displaced by microglia to the amyloid plaques, i.e. an aggregate of A$\beta$-oligomers that becomes an inert element (no diffusion, no polymerization, no depolymerization). Each element of the system (monomers, proto-oligomers, and oligomers except those in the amyloid plaques) diffuses, with a size-dependent rate. Microglial cells can also have random motility, but they displace free A$\beta$-oligomers to the amyloid plaques through a chemotactic process and amyloid plaques will more likely develop where the concentration of microglial cells is high. These cells are known indeed to be very reactive to neuronal insults \cite{hansen2018microglia,mazaheri2017trem,ransohoff2016polarizing,sarlus2017microglia}. 

Inflammation processes seem to be crucial to control the disease progression \cite{kinney2018inflammation} and to find possible therapeutic strategies to mitigate negative effects of AD. For example, it suggested in \cite{rivers2020anti} that diclofenac might be associated with slower cognitive decline with possible perspectives on AD progression. However, despite epidemiological evidences, robust clinical trials have not been successful in providing efficacy evidence of such anti-inflammatory treatments \cite{adapt2008cognitive,adapt2007naproxen,ozben2019neuro}. On the other hand, in \cite{ali2019recommendations,imbimbo2010nsaids} it is suggested that anti-inflammatory treatments might be effective if they are applied years before the development of clinical symptoms. Furthermore in \cite{imbimbo2010nsaids}, it is mentioned that some anti-inflammatory treatments decrease the levels of A$\beta$ by
allosterically inhibiting the $\gamma$-secretase complex, which could give interesting perspectives in finding efficient cures. Other treatment suggestions include actions on multiple targets besides neuroinflammatory and neuroprotective effects such as anti-amyloid and anti-tau effects \cite{huang2020clinical,zhu2018can}.

The article is organized as follows. In Section \ref{mathmodel} we introduce the main system of partial differential equations and we describe the reactions involving monomers, (proto-)oligomers, microgial cells and interleukins, which are summarized in Figure \ref{fig:inflammatory}. This model incorporating spatial dependence and aggregation processes is inspired in previous works such as \cite{Bertsch2016,andrade2020modeling} and it will serve as our base model. As a general goal, we aim to understand the progression of AD through an analysis compatible simplified version for this base model. In particular, in Section \ref{bimonomeric} we focus on a spatially-homogeneous version of the main model, where polymerization and de-polymerization processes is simplified. For this simplified model we analyze the existence of steady states depending on the parameters. Finally in Section \ref{numericalsim}, we present numerical simulations of the simplified model in order to observe the different possible dynamics of solutions and the stability of the steady states.

\begin{figure}[!ht]
\centering
\includegraphics[scale=0.35]{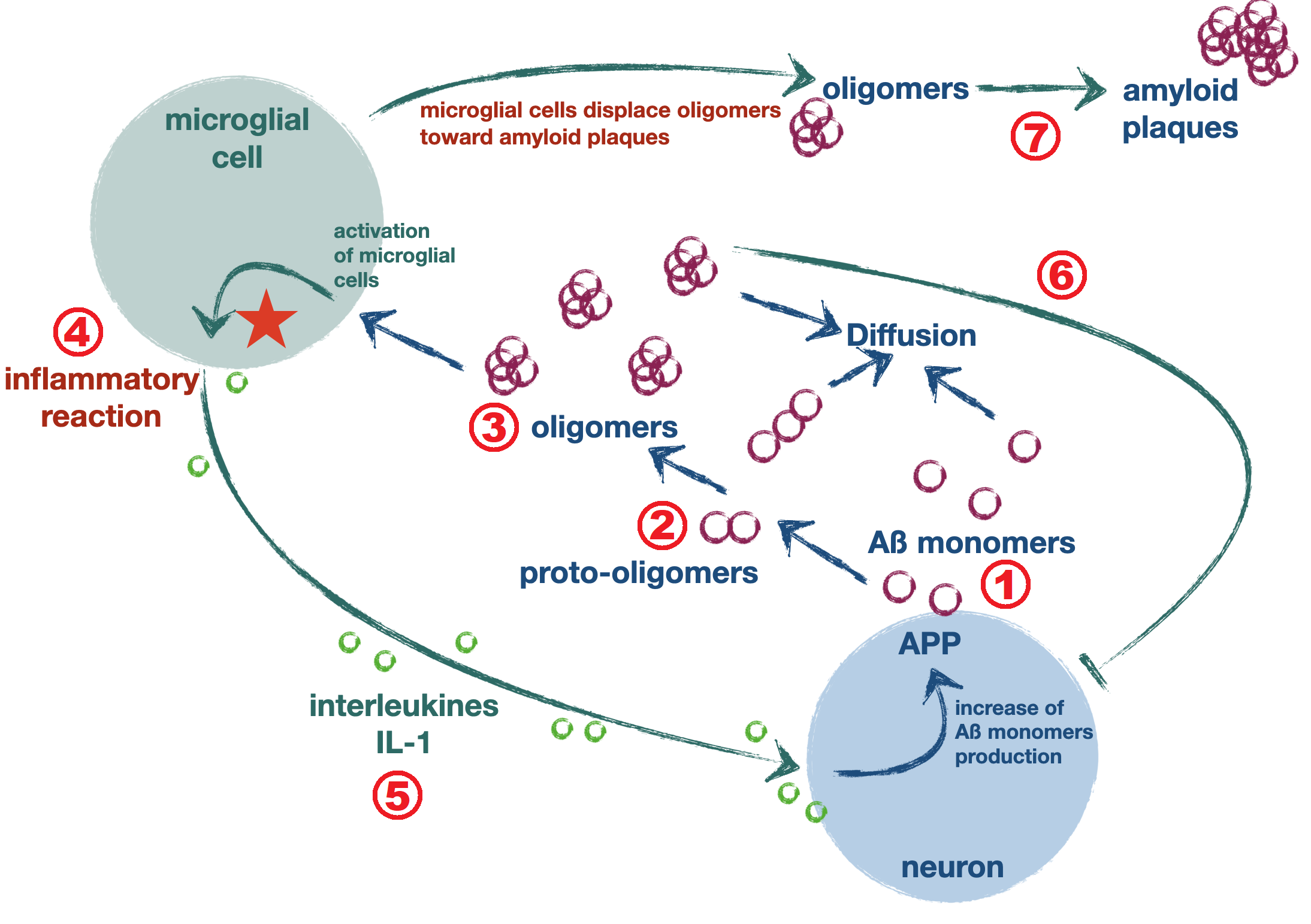}
\caption{{\bf Schematic representation of A$\beta$-monomers and inflammation cycle}. Neurons produce A$\beta$-monomers \textbf{(1)} that polymerize into  proto-oligomers \textbf{(2)}. These proto-oligomers eventually reach a critical size to become stable oligomers \textbf{(3)}. They activate microglial cells triggering an inflammatory reaction \textbf{(4)} by producing interlekins. The interleukins stimulate neurons \textbf{(5)} to increase A$\beta$-monomers production, closing the positive feedback cycle. Moreover when oligomer concentration is high, neurons are stressed \textbf{(6)} and decrease the A$\beta$-monomers production, while oligomers are displaced by microglial cells toward the amyloid plaques \textbf{(7)}.}
\label{fig:inflammatory}
\end{figure}

\section{Mathematical model}
\label{mathmodel}

Let us detail each equation of the system. In this model, we consider that dynamics occur in a part of the brain considered as an open bounded domain $\Omega\subset\R^d$ (with $d\in\{2,3\}$) and the main variables of the system are given in Table \ref{tab:variables}.

\begin{table}[!ht]
    \centering
    \caption{\bf Variables of the mathematical model}
    \begin{tabular}{|l|l|}\hline
        Variable & Definition \\ \hline
        $u_i(t,x)$ & Concentration of A$\beta$-proto-oligomers of size $i$. \\ \hline $u(t,x)$ & Concentration of A$\beta$-oligomers.\\ \hline
        $u_p(t,x)$ & Concentration of oligomers in the amyloid plaques.\\ \hline
        $m(t,x)$  & Concentration of A$\beta$-monomers. \\ \hline
        $M(t,x)$ & Concentration of microglial cells.\\ \hline
        $I(t,x)$ & Concentration of interleukins.\\ \hline
    \end{tabular}
    \label{tab:variables}
\end{table}

\begin{enumerate}
     \item \textbf{Proto-oligomers}: (see point \textbf{(2)} in Figure \ref{fig:inflammatory}). A$\beta$-proto-oligomers have a size ranging from $i=2$ to $i=i_0-1$ and become oligomers when they reach the size $i=i_0$. Equations for proto-oligomers with size $i=[2,\cdots,i_0-1]$ are given by
   
    $$\dfrac{\partial u_i}{\partial t} (t,x) =  r_{i-1}u_{i-1}(t,x)m(t,x)+b_i u_{i+1}(t,x)
             -r_i u_i(t,x)m(t,x)
     -b u_i(t,x)  + \nu_i \Delta u_i(t,x),$$

    with $r_1$ the bi-monomeric nucleation rate, $u_1= m(t,x)$ the monomer concentration , $b_i$ the rate of monomer loss from proto-oligomers and $r_i$ the rate of monomer gain. The couple $(r_i,b_i)$, $i\in [2,\cdots,i_0-1]$ is called kinetic coefficients with the notation $b_i=b$ if $i\leq i_0-2$ and $b_{i_0 - 1} = 0$.
    
    The first term of the right-hand side stands in one hand for the bi-monomeric nucleation when $i=2$ and on the other hand for the polymerization with rate $r_{i-1}$ ($i\geq 3$) of a proto-oligomer of size $i-1$ with the contact of a monomer giving then a proto-oligomer of size $i$.  The second term describes the depolymerization with rate $b_i$ of a proto-oligomer of size $i+1$ to a proto-oligomer of size $i$. The third and fourth term are related to the symmetric 
    process respectively of polymerization and depolymerization of a proto-oligomer of size $i$.  Finally each proto-oligomer can diffuse with a size dependent coefficient (the smaller the size the faster the diffusion). 

    \item \textbf{Free oligomers}: (see point \textbf{(3)} in Figure \ref{fig:inflammatory}). The variation of the A$\beta$-oligomer population is described as follows
     \begin{center}
    $\begin{array}{lll}
            \dfrac{\partial u}{\partial t} (t,x)  &  =  &  r_{i_{0}-1}u_{i_{0}-1}(t,x)m(t,x)-\gamma(M(t,x))u(t,x) -\tau_0 u(t,x) +  \nu_{i_0} \Delta u(t,x),
    \end{array}$
    \end{center}

     where the first term of the right-hand side stands for the polymerization with rate $r_{i_0-1}$ of a proto-oligomer of size $i_0-1$ with the contact of a monomer giving then an oligomer of size $i_0$. The second term describes the recruitment of oligomers to the amyloid plaques by microglial cells $M$ with a rate $\gamma$ given by
     
     $$\gamma(M)=\gamma_0+\dfrac{\gamma_1M}{1+\gamma_2M},$$
     
     depending on $M$ through a Michaelis-Menten function with parameters $\gamma_i$ ($i\in \{0,1,2\}$) and the third term corresponds to the degradation of oligomers with rate $\tau_0$. Finally each oligomer diffuses with rate $\nu_{i_0}$. It is important to remind here that oligomers neither polymerize nor depolymerize unlike proto-oligomers.
     
    \item \textbf{Oligomers in the amyloid plaques}: (see point \textbf{(7)} in Figure \ref{fig:inflammatory}). The variation of the A$\beta$-oligomer population stuck in the amyloid plaques is described as follows
     \begin{center}
    $\begin{array}{lll}
            \dfrac{\partial u_p}{\partial t} (t,x)  &  =  &   \gamma(M(t,x))u(t,x)-\tau_p u_p(x,t),
    \end{array}$
    \end{center}
     where the first term of the right-hand side stands for the recruitment of free oligomers to the amyloid plaques by microglial cells $M$ with a rate $\gamma$ and the second term represents the corresponding loss with rate $\tau_p$. We remind here that  oligomers in the amyloid plaques neither polymerize, depolymerize nor diffuse.

      \item \textbf{Monomers}: (see point \textbf{(1)} in Figure \ref{fig:inflammatory}). The variation of the A$\beta$-monomer population is described as follows
     \begin{center}
    $\begin{array}{lll}
            \dfrac{\partial m}{\partial t} (t,x)  &  =  
            &  - r_1 m^2  -\displaystyle \sum_{i=2}^{i_0-1} r_i u_i(t,x)m(t,x) +b \displaystyle \sum_{i=2}^{i_0-1}  u_i(t,x)\\
            & & + S( u(t,x), I(t,x))-d\, m(t,x)+\nu_{1} \Delta m(t,x),
    \end{array}$
    \end{center}
    where $I(t,x)$ is the concentration of interleukins and the function $S$ is given by
    \begin{equation}
    \label{stress}
        S(u,I) = \frac{\tau_S}{1 + Cu^n} I,\qquad n\ge 1.
    \end{equation}
 The term $S(u,I)$ is called the stress function. According to the form of this function, under a high concentration of oligomers $u$ surrounding the neuron, this latter will be stressed and stop the production of A$\beta$ monomers, which means that $S(u,I)$ is close to zero (see point \textbf{(6)} in Figure \ref{fig:inflammatory}).  \\
We remark that the neuron can be torn between the decision of producing A$\beta$-monomers due to the inflammation (caused by
the interleukins) and the stress caused by the amount of oligomers surrounding the neurons causing the UPR process that stops this A$\beta$ production. Note that this object is one the major key properties in our model. For simplicity, we do not take into account the fact that microglia produce A$\beta$-monomers and this will be considered in a future work with a more complex model.

The first and second terms of the right-hand side stand respectively for the bi-monomeric nucleation and the polymerization of proto-oligomers of all sizes, while the third term describes the corresponding processes of depolymerization of proto-oligomers. The fourth term is the source term depending on the inflammation reaction caused by interaction of A$\beta$-oligomers with microglial cells. The fifth term describes the degradation of the monomers with a rate $d$. This rate $d$ may depend on oligomer concentration and behave as a Hill function, but for simplicity we consider in the sequel that $d$ is a given positive constant. Finally, the last term stands for the monomer diffusion ability with rate $\nu_1$.

\item \textbf{Microglial cells}: (see point \textbf{(4)} in Figure \ref{fig:inflammatory}). The evolution of the microglial cells population is described as follows
     \begin{center}
    $\begin{array}{lll}
            \dfrac{\partial M}{\partial t} (t,x)  &  =  &   D_1\Delta M(t,x) -\alpha \nabla\cdot(M(t,x)\nabla u(t,x))\\
            & &+\lambda_M+\dfrac{\alpha_1u(t,x)}{1+\alpha_2u(t,x)}\left(\hat{M}-M(t,x)\right)M(t,x) - \sigma M(t,x),
    \end{array}$
    \end{center}
    where the first term of the right-hand side stands for the diffusion of microglial cells with the rate $D_1$. The second term represents the chemotaxis of microglial cells in response to the increase of oligomers population. This chemotactic effect results in an activation of microglial cells due to the presence of oligomers which causes an inflammatory reaction with production of interleukins (IL-1). The third term describes the proliferation of microglial cells with a constant rate $\lambda_M$. In the fourth term $\hat{M}$ is the maximum capacity of microglial cells in the neuron environment and the last term characterizes the loss of microglial cells with the rate $\sigma$.\\

\item \textbf{Interleukins}: (see point \textbf{(5)} in Figure \ref{fig:inflammatory}). The equation for the evolution of interleukins is:
\begin{center}
    $$ \dfrac{\partial I}{\partial t} (t,x)    =     D_I \Delta I(t,x) + \dfrac{\tau_1 u(t,x)}{1 + \tau_2 u(t,x)}M(t,x)  - \tau_3 I(t,x),$$
    \end{center}
where the first term in the right-hand side is the diffusion
of the interleukins, the second term represents the proliferation which depends on the concentration of oligomers through a Michaelis-Menten function with parameters $\tau_1,\tau_2$ and the microglial cells. The third term represents the loss of interleukins with rate $\tau_3$.
\end{enumerate}

We note that all equations are complemented with Neumann boundary conditions with zero flux through $\partial\Omega$ and the parameters of the system are non-negative real numbers. The main interactions of this system are summarized in Figure \ref{fig:inflammatory}.


\section{A bi-monomeric simplified model}
\label{bimonomeric}

In order to proceed to a full mathematical analysis, and understand the qualitative dynamics of the actors of this problem, we consider a simplified model version of the full system of partial differential equations. We assume a bi-monomeric nucleation, \textit{i.e.} two monomers can merge to form a free oligomer ($m+m\to u$) and the intermediate proto-oligomer phase is absent. For this case, we assume that when a monomer attaches to a free oligomer, the latter does not change and the monomer is consumed ($u+m\to u$). The equations of the simplified PDE system are the following

\begin{empheq}[left=\empheqlbrace]{align}
    \frac{\partial u}{\partial t}=&\nu_{2} \Delta u  +  r_1 m^2-\gamma(M)u-\tau_0 u,\nonumber\\
    \frac{\partial u_p}{\partial t}=&\gamma(M)u-\tau_p u_p,\nonumber \\
    \frac{\partial m}{\partial t}=&\nu_{1} \Delta m + \frac{\tau_S}{1+Cu^n}I-dm- r_2 um-r_1 m^2,\label{edp-simple}\\
    \frac{\partial M}{\partial t}=& D_1\Delta M -\alpha \nabla\cdot(M\nabla u)+\frac{\alpha_1 u}{1+\alpha_2 u}(\hat{M}-M)M-\sigma M +\lambda_M,\nonumber\\
    \frac{\partial I}{\partial t}=&D_I \Delta I+\frac{\tau_1 u}{1+\tau_2 u}M-\tau_3 I.\nonumber
\end{empheq}

We also assume that when a monomer binds an oligomer, then the monomer is consumed with rate $r_2$ and the number oligomer molecules does not change. Under these assumptions, we notice that there is no term involving the rate $r_2$ in the equation of oligomers.

\subsection{Spatially homogeneous model}

To simplify the analysis in this work, we focus on spatially-homogeneous solutions of the bi-monomeric model \eqref{edp-simple}. For simplicity, we assume that rate of recruitment of oligomers to the amyloid plaques $\gamma(M)$ is constant, which corresponds essentially to consider an average rate of oligomers being recruited and we consider that oligomers have a highly stable structure and their degradation is negligible, which means $\tau_0=0$. However, the results on the qualitative analysis of the system do not change if we consider the degradation of oligomers. Under this setting, the model is reduced to the following system of ordinary differential equations

\begin{empheq}[left=\empheqlbrace]{align}
    \frac{du}{dt}=&r_1 m^2 - \gamma_0 u,\nonumber\\
    \frac{du_p}{dt}=&\gamma_0 u-\tau_p u_p,\nonumber \\
    \frac{dm}{dt}=& \frac{\tau_S}{1+Cu^n}I-dm-r_2 um-r_1 m^2,\label{ode1-u-m-M-I}\\
    \frac{dM}{dt}=&\frac{\alpha_1 u}{1+\alpha_2 u}(\hat{M}-M)M-\sigma M +\lambda_M,\nonumber\\
    \frac{dI}{dt}=&\frac{\tau_1 u}{1+\tau_2 u}M-\tau_3 I.\nonumber
\end{empheq}

Thanks to this simplification we obtain the following result.

\begin{prop}
For any non-negative initial condition $(u(0),u_p(0),m(0),M(0),I(0))$, the system has a unique global solution which is bounded.
\end{prop}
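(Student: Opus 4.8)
\emph{Proof sketch (plan).} The plan is to combine the Cauchy--Lipschitz (Picard--Lindel\"of) theorem for local well-posedness, an invariant-region argument for non-negativity, and a triangular cascade of a priori estimates for global existence and boundedness. First I would set up local existence and uniqueness: the right-hand side of \eqref{ode1-u-m-M-I}, viewed as a map of $(u,u_p,m,M,I)$ on the closed positive orthant, is locally Lipschitz, since the rational nonlinearities $\frac{1}{1+Cu^n}$, $\frac{\alpha_1 u}{1+\alpha_2 u}$, $\frac{\tau_1 u}{1+\tau_2 u}$ have denominators bounded below by $1$ for $u\ge 0$ and the remaining terms are polynomial; the only slightly delicate point is that $u\mapsto u^n$ is merely locally Lipschitz at $u=0$ when $n=1$. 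To obtain a vector field defined on all of $\R^5$ I would replace $u^n$ by $(\max(u,0))^n$, which is globally locally Lipschitz for $n\ge 1$; Picard--Lindel\"of then yields, for every initial datum, a unique maximal solution on some interval $[0,T_{\max})$.

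Next I would show that the closed positive orthant is positively invariant, so that on the physically relevant region the truncated system coincides with the original one. Writing each equation along the solution in the form $\dot x_i = g_i - h_i x_i$, one checks that $g_i\ge 0$ whenever all components are non-negative: indeed $g_u=r_1 m^2$, $g_{u_p}=\gamma_0 u$, $g_m=\frac{\tau_S}{1+Cu^n}I$, $g_M=\lambda_M$ and $g_I=\frac{\tau_1 u}{1+\tau_2 u}M$ are then all $\ge 0$. Hence on each boundary face $\{x_i=0\}$ one has $\dot x_i=g_i\ge 0$, and a standard Nagumo-type (barrier) argument shows that no trajectory issued from the orthant can leave it.

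The heart of the proof is a cascade of uniform-in-time bounds along the chain $M\to I\to m\to u\to u_p$, each step using only the saturation of the coupling nonlinearities and the bounds already obtained. (i) Since $\frac{\alpha_1 u}{1+\alpha_2 u}(\hat M-M)M\le 0$ whenever $M\ge\hat M$ (a product of a non-negative and a non-positive factor), one gets $\dot M\le -\sigma M+\lambda_M$ on $\{M\ge\hat M\}$, and a comparison argument gives $M(t)\le C_M:=\max\{M(0),\hat M,\lambda_M/\sigma\}$ for all $t\ge 0$. (ii) Using $\frac{\tau_1 u}{1+\tau_2 u}\le\frac{\tau_1}{\tau_2}$ and (i), $\dot I\le\frac{\tau_1}{\tau_2}C_M-\tau_3 I$, hence $I(t)\le C_I:=\max\{I(0),\frac{\tau_1 C_M}{\tau_2\tau_3}\}$. (iii) Dropping the non-positive terms $-r_2 um-r_1 m^2$ and using $\frac{1}{1+Cu^n}\le 1$ together with (ii), $\dot m\le\tau_S C_I-dm$, hence $m(t)\le C_m:=\max\{m(0),\tau_S C_I/d\}$. (iv) Then $\dot u\le r_1 C_m^2-\gamma_0 u$, so $u(t)\le C_u:=\max\{u(0),r_1 C_m^2/\gamma_0\}$. (v) Finally $\dot u_p\le\gamma_0 C_u-\tau_p u_p$, so $u_p(t)\le C_{u_p}:=\max\{u_p(0),\gamma_0 C_u/\tau_p\}$. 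These bounds are uniform in $t$ (they use positivity of the loss rates $\sigma,\tau_3,d,\gamma_0,\tau_p$). Consequently the solution stays in the compact box $[0,C_u]\times[0,C_{u_p}]\times[0,C_m]\times[0,C_M]\times[0,C_I]$; the standard continuation criterion then forces $T_{\max}=+\infty$, and global existence, uniqueness and boundedness all follow.

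I do not expect any genuinely hard step: the whole argument is a standard invariance-plus-a-priori-estimate scheme. The two points needing a little care are the $M$-equation, whose logistic-type term $\frac{\alpha_1 u}{1+\alpha_2 u}(\hat M-M)M$ is not sign-definite and so must be controlled by distinguishing $M\ge\hat M$ from $M<\hat M$ rather than by a naive monotonicity argument, and the mild non-smoothness of $u^n$ at $u=0$, which is dealt with by the truncation used for local existence; the key structural observation making the boundedness work is that every nonlinearity coupling one equation to the next in the chain is a bounded function of $u$, which decouples the estimates into a finite triangular cascade.
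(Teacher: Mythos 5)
Your proposal is correct and follows essentially the same route as the paper: Cauchy--Lipschitz for local existence and uniqueness, the standard quasi-positivity argument for invariance of the non-negative orthant, and a priori bounds yielding global existence. The paper states the boundedness step in one line ("obtained analogously"); your triangular cascade $M\to I\to m\to u\to u_p$, exploiting the saturation of the Michaelis--Menten couplings and the sign of the logistic term for $M\ge\hat M$, is precisely the detail being omitted there.
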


\begin{proof}
Existence and uniqueness of a local solution is straightforward from Cauchy-Lipschitz Theorem for ordinary differential equations. Positivity of solutions comes from the standard quasi-positivity argument (see for instance \cite{haraux2016simple}, Section 2) and boundedness is obtained analogously. Since solutions of system \eqref{ode1-u-m-M-I} are bounded, they are defined for all $t>0$.
\end{proof}

\subsection{Steady states}

The stationary points of system \eqref{ode1-u-m-M-I}, correspond to solutions of the following system
\begin{empheq}[left=\empheqlbrace]{align}
   r_1 m^2 - \gamma_0 u=0,\nonumber\\
   \gamma_0 u-\tau_p u_p=0,\nonumber \\
   \frac{\tau_S}{1+Cu^n}I-dm-r_2 um-r_1 m^2=0,\label{ode1-u-m-M-I-est}\\
    \frac{\alpha_1 u}{1+\alpha_2 u}(\hat{M}-M)M-\sigma M +\lambda_M=0,\nonumber\\
    \frac{\tau_1 u}{1+\tau_2 u}M-\tau_3 I=0.\nonumber
\end{empheq}

One of the solutions of this system is the disease-free equilibrium, given by $\left(0,0,0,\tfrac{\lambda_M}{\sigma},0\right)$. Besides this equilibrium there may be others steady states depending on the parameter values of our system, whose existence will be studied in this section.
Concerning the disease-free equilibrium, we get the following result.

\begin{prop}
For the system \eqref{ode1-u-m-M-I}, the disease-free equilibrium $\left(0,0,0,\tfrac{\lambda_M}{\sigma},0\right)$ is locally asymptotically stable for every choice of positive parameters.
\end{prop}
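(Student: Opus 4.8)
The plan is to linearize the system \eqref{ode1-u-m-M-I} at the disease-free equilibrium $E_0 = \left(0,0,0,\tfrac{\lambda_M}{\sigma},0\right)$ and show that the Jacobian has all eigenvalues with negative real part. First I would compute the Jacobian matrix $J$ of the right-hand side at $E_0$, ordering the variables as $(u, u_p, m, M, I)$. The key observation is that at $u = m = I = 0$ many of the nonlinear couplings vanish or linearize trivially: for instance $\partial_u\!\left(\tfrac{\tau_1 u}{1+\tau_2 u} M\right)\big|_{E_0} = \tau_1 \tfrac{\lambda_M}{\sigma}$, $\partial_M(\cdots)|_{E_0} = 0$, and $\partial_u\!\left(\tfrac{\alpha_1 u}{1+\alpha_2 u}(\hat M - M)M\right)\big|_{E_0} = \alpha_1 \tfrac{\lambda_M}{\sigma}\left(\hat M - \tfrac{\lambda_M}{\sigma}\right)$, while $\partial_M$ of the $M$-equation at $E_0$ equals $-\sigma$. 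Likewise $\partial_I S(u,I)|_{E_0} = \tau_S$ and $\partial_u S|_{E_0} = 0$ since the derivative of $1/(1+Cu^n)$ vanishes at $u=0$ for $n \ge 1$.

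The next step is to exploit the block-triangular structure this produces. The $u_p$-equation contributes the single decoupled term $-\tau_p$ on the diagonal (the $u$-coupling $\gamma_0$ sits strictly below the diagonal in the $(u_p,u)$ entry and does not affect the spectrum), so $-\tau_p < 0$ is an eigenvalue and we may delete the $u_p$ row and column. Similarly, the $M$-variable decouples from $(u,m,I)$ at linear order on the left: the only entry in the $M$-row is $-\sigma$ on the diagonal (plus the $u$-entry below), so $-\sigma < 0$ is an eigenvalue and the $M$ row/column can also be removed. This reduces the problem to the $3\times 3$ block governing $(u,m,I)$, which has the form
\begin{equation*}
A = \begin{pmatrix} -\gamma_0 & 2 r_1 m^* & 0 \\ 0 & -d & \tau_S \\ \tau_1 \tfrac{\lambda_M}{\sigma} & 0 & -\tau_3 \end{pmatrix},
\end{equation*}
and since $m^* = 0$ at $E_0$ the top-right-ish entry $2r_1 m^*$ is actually $0$, so $A$ is itself lower-triangular (or can be seen to have $-\gamma_0$ as an eigenvalue with the remaining $2\times 2$ block in $(m,I)$ being $\begin{pmatrix} -d & \tau_S \\ 0 & -\tau_3\end{pmatrix}$ after noting the $I$-row's $u$-coupling no longer feeds back). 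Then the remaining eigenvalues are simply $-\gamma_0, -d, -\tau_3$, all strictly negative.

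I would then conclude by invoking the principle of linearized stability (Lyapunov's indirect method): since every eigenvalue of the Jacobian at $E_0$ has strictly negative real part for all positive parameter values, $E_0$ is locally asymptotically stable. The main thing to get right — and the only place a little care is needed — is the bookkeeping of which couplings survive linearization at $E_0$; in particular verifying that $\partial_u S|_{E_0} = 0$ (using $n\ge 1$) and that the $m^2$, $um$, $u$-$M$ Michaelis–Menten, and $u$-$I$ Michaelis–Menten terms all have vanishing derivatives in the "wrong" slots at $u=m=I=0$, so that the Jacobian is genuinely (block-)triangular with manifestly negative diagonal. There is no real obstacle here; the result is essentially a direct computation once the triangular structure is recognized.
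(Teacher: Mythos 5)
Your proposal is correct and takes essentially the same approach as the paper: compute the Jacobian at the disease-free equilibrium, exploit its (block-)triangular structure, and read off the eigenvalues $\{-\gamma_0,-\tau_p,-d,-\sigma,-\tau_3\}$, all strictly negative. Two cosmetic remarks only: for $n=1$ the derivative of $1/(1+Cu^n)$ at $u=0$ equals $-C$ rather than $0$, but the entry $\partial_u S$ at the equilibrium still vanishes because it is multiplied by $I^*=0$; and your reduced $3\times 3$ block is not literally lower triangular because of the $\tau_S$ in the $(m,I)$ slot, though your parenthetical reduction (peel off $-\gamma_0$ from the zero row, then the remaining $2\times 2$ block is triangular) correctly yields $-\gamma_0,-d,-\tau_3$.
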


\begin{proof}
The Jacobian matrix around the vector $\left(0,0,0,\tfrac{\lambda_M}{\sigma},0\right)$ is given by
\begin{equation*}
    J=\begin{bmatrix}
    -\gamma_0 & 0 & 0 & 0 & 0\\
    \gamma_0 & -\tau_p & 0 & 0 & 0\\
    0 & 0 & -d & 0 & \tau_S\\
    \alpha_1\left(\hat{M}-\frac{\lambda}{\sigma}\right)\frac{\lambda}{\sigma} & 0 & 0 & -\sigma &0\\
    0 & 0 & 0 & 0 &-\tau_3
    \end{bmatrix},
\end{equation*}
whose set of eigenvalues is given by $\{-\gamma_0,-\tau_p,-d,-\sigma,-\tau_3\}$. Since they are all negative, then the disease-free equilibrium is locally asymptotically stable.
\end{proof}

An interesting question is to determine under which parameter values existence of non-trivial steady-states (\textit{i.e.} AD persists) holds. In this regard, we have the following result.

\begin{thm}
\label{existence}
Assume that the parameters satisfy the condition
\begin{equation}
\label{cond-sigma-g0-tau3}
    \sigma\gamma_0\tau_3< \tau_1 \tau_S \lambda_M.
\end{equation}

Then for $d>0$ small enough, there exist at least two positive steady states of system \eqref{ode1-u-m-M-I-est}. If $d>0$ is large enough, then there are not positive solutions of system \eqref{ode1-u-m-M-I-est}, regardless Condition \eqref{cond-sigma-g0-tau3}.
\end{thm}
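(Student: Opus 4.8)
The plan is to collapse the stationary system \eqref{ode1-u-m-M-I-est} to a single scalar equation in the oligomer concentration $u$ and then argue by the intermediate value theorem. From the first equation we get $m=m(u):=\sqrt{\gamma_0 u/r_1}$, and from the second $u_p=\gamma_0 u/\tau_p$. For each fixed $u\ge 0$ the fourth equation is a quadratic in $M$ with constant term $-\lambda_M<0$, hence it has a unique positive root $M=M(u)$; one checks that $u\mapsto M(u)$ is continuous on $[0,\infty)$ with $M(0)=\lambda_M/\sigma$ and a finite limit as $u\to\infty$ (since $\alpha_1 u/(1+\alpha_2 u)\to\alpha_1/\alpha_2$), so in particular $M$ is bounded on $[0,\infty)$. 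The fifth equation then gives $I=I(u):=\tau_1 u\,M(u)/\big(\tau_3(1+\tau_2 u)\big)$. Substituting into the third equation and using $r_1 m(u)^2=\gamma_0 u$, a positive steady state of \eqref{ode1-u-m-M-I-est} corresponds exactly to a root $u>0$ of
\[
F(u):=\frac{\tau_S}{1+Cu^n}\,I(u)-(d+r_2u)\,m(u)-\gamma_0 u ,
\]
and distinct positive roots yield distinct positive steady states.

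Next I would record the behaviour of $F$. Write $F=G-d\,m$ with $G(u):=\frac{\tau_S}{1+Cu^n}I(u)-r_2u\,m(u)-\gamma_0 u$ independent of $d$. Near $u=0$ one has $G(u)/u\to \tau_S\tau_1\lambda_M/(\tau_3\sigma)-\gamma_0$ while $m(u)/u\to+\infty$, so $F(u)/u\to-\infty$; hence $F<0$ on some interval $(0,\varepsilon)$, for every $d>0$. As $u\to\infty$ the first term of $F$ tends to $0$ (since $C>0$, $n\ge 1$ and $M$ is bounded), whereas $(d+r_2u)m(u)$ grows like $u^{3/2}$ and $\gamma_0 u$ grows linearly, so $F(u)\to-\infty$.

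For the non-existence statement I would show that positive roots cease to exist once $d>\sup_{u>0}G(u)/m(u)$. A direct computation gives
\[
\frac{G(u)}{m(u)}=\frac{\tau_S\tau_1}{\tau_3}\sqrt{\frac{r_1}{\gamma_0}}\cdot\frac{\sqrt{u}\,M(u)}{(1+Cu^n)(1+\tau_2u)}-r_2u-\sqrt{\gamma_0 r_1 u},
\]
which tends to $0$ as $u\to0^+$ and to $-\infty$ as $u\to\infty$, hence is bounded above, say by $d_{\max}<\infty$. For $d>d_{\max}$ we get $F(u)=G(u)-d\,m(u)<0$ for all $u>0$, so \eqref{ode1-u-m-M-I-est} has no positive solution; this argument uses no assumption on the other parameters, which is precisely the last sentence of the theorem.

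Finally, assume \eqref{cond-sigma-g0-tau3}, i.e. $\sigma\gamma_0\tau_3<\tau_1\tau_S\lambda_M$; equivalently $\lim_{u\to0^+}G(u)/u>0$, so $G(u^*)>0$ for some fixed $u^*>0$. Since $m(u^*)>0$ does not depend on $d$, for $d$ small enough (namely $d<G(u^*)/m(u^*)$) we have $F(u^*)>0$. Combining this with $F<0$ on $(0,\varepsilon)$ and $F(u)\to-\infty$ as $u\to\infty$, the intermediate value theorem produces one root of $F$ in $(0,u^*)$ and another in $(u^*,\infty)$, hence at least two positive steady states. I expect the main obstacle to be the reduction step itself — showing that $M(u)$ is a well-defined, positive, continuous, bounded function on all of $[0,\infty)$, and correctly tracking near $u=0$ the competition between the $\sqrt{u}$ contribution of $m(u)$ and the linear terms — rather than any conceptual difficulty; once $F$ is isolated, the rest is elementary one-variable analysis.
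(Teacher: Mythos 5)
Your proof is correct and follows essentially the same route as the paper: reduce the stationary system to a single scalar equation, verify the sign of that equation near the origin, at an intermediate point (which exists for small $d$ thanks to condition \eqref{cond-sigma-g0-tau3}), and at infinity, then apply the intermediate value theorem twice; non-existence for large $d$ follows because the $d$-independent part is bounded. The only difference is that you parametrize by $u$ (with $m=\sqrt{\gamma_0 u/r_1}$) while the paper parametrizes by $m$ (with $u=\rho m^2$), which is an immaterial change of variables.
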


\begin{proof}
From system \eqref{ode1-u-m-M-I-est}, we solve for $u$ and $u_p$ in terms of $m$ and we get the following relation

$$u = \rho m^2,\: u_p=\frac{r_1}{\tau_p}m^2\qquad\textrm{with}\quad\rho=\frac{r_1}{\gamma_0}.$$

From the equation of microglial cells, we solve the quadratic equation of $M$ in terms of $u$ and by taking the positive root we get the following equality

\begin{equation}
\label{M-of-u-gen}
M = \frac {\sqrt{\Delta(u)} - \sigma - (\sigma \alpha_2 - \hat{M}
\alpha_1) u } {2 \alpha_1 u},
\end{equation}
with 
$\Delta(u) = (\sigma + (\sigma \alpha_2 - \hat{M} \alpha_1) u)^2 +
4 \lambda_M \alpha_1 u (1 + \alpha_2 u)$. For the interleukins we get relation
$$ I = \frac{\tau_1}{\tau_3}\frac{\rho m^2}{1 + \tau_2 \rho m^2}M. $$
Substituting these expressions into equation of $m$ in \eqref{ode1-u-m-M-I-est}, we obtain the equation with respect to $m$:
\begin{equation}
\label{eq-P-F}
m (P(m)+d) = m F(m) ,
\end{equation}
where the functions $P$ and $F$ are given by

\begin{align}
P(m) &=  r_2\rho m^2 + r_1 m, \nonumber \\
F(m) &= \frac{2 \tau_1 \tau_S\lambda_M}{\tau_3} \frac {\rho m (1 + \alpha_2 \rho m^2)}
{[\sqrt{\Delta(\rho m^2)} + \sigma +
(\sigma \alpha_2 - \hat{M} \alpha_1) \rho m^2]
(1 + \tau_2 \rho m^2) (1 + C \rho^n m^{2n})}.
\end{align}

The disease-free equilibrium corresponds to the case when $m=0$ in Equation \eqref{eq-P-F}. In order to get a positive steady state of system \eqref{ode1-u-m-M-I} we must determine the values where $P(m)+d=F(m)$.

From the definition of $\Delta(u)$, we remark that the denominator is strictly positive in the function $F$. We observe that $F(0)=0$, $F(m)>0$ for $m>0$ and $F(m) \to 0$ as $m \to \infty$, since the numerator is of order $O(m^3)$ and the denominator is of order $O(m^{2n+4})$.

Moreover $F'(0)$ is given by:
$$F'(0)= \frac{ r_1 \tau_1 \tau_S \lambda_M}{\sigma\gamma_0\tau_3}
>0.$$

From Hypothesis \eqref{cond-sigma-g0-tau3} we observe that
$$P'(0)<F'(0),$$
hence there exists $\tilde{m} > 0$ such that
\begin{equation}
\label{ineq-tilde-m}   
P(m) < F(m) \quad \textrm{for all}\: m \in (0, \tilde{m}).
\end{equation}
Let us denote 
$ m_0 = \sup \; \{\tilde{m} > 0 \colon\;\textrm{property \eqref{ineq-tilde-m} holds}\}>0$. Since $P(m)\to\infty$ as $m\to\infty$ we conclude that $m_0<\infty $ and from continuity we get 
\begin{equation}
    \label{ineq-m0}
    P(m) < F(m)\quad \textrm{for all}\: m \in (0, m_0),\qquad P(m_0) = F(m_0).
\end{equation}

Let us now denote
$$\tilde{d} = \max_{y \in [0, m_0]} (F(y) - P(y)),$$
which is strictly positive by condition \eqref{ineq-m0}. Let $y_0 \in(0, m_0)$ such that $F(y_0) - P(y_0) = \tilde{d}$. We now take an arbitrary $d$ such that $0 < d < \tilde{d}$. And the following inequalities hold

$$P(0)+d > F(0),\qquad P(y_0)+d < F(y_0),\qquad P(m_0)+d>F(m_0).$$

Therefore, there exists a positive solution of Equation \eqref{eq-P-F} in $(0,y_0)$ and another positive solution in $(y_0,m_0)$. This proves the existence result.

For the non-existence result, observe that $F$ reaches a maximum, since the $F(0)=0$ and $F(m)\to 0$ as $m\to\infty$, and this maximum is independent of $d$. Hence for $d$ large enough, we have that
$$P(m)+d>\max_{y>0}{F(y)}\ge F(m)\quad\textrm{for all}\: m\ge0,$$
and we conclude that there is no solution in that case.
\end{proof}

From the previous result we assert that when the rest of the parameters are fixed, there exists a critical value of degradation rate of monomers $d=d_c$, such that for $d>d_c$ the system \eqref{ode1-u-m-M-I} has only the disease-free equilibrium and for $d>d_c$ there are at least two positive solutions. From a biological point of view, this means that a high degradation of monomers can avoid the persistence of AD, while a lower degradation of monomers is not sufficient to stop the pathogenic cycle of monomers, oligomers and interleukins.

\newpage

\section{Numerical Simulations}
\label{numericalsim}

The main goal of this section is to present a qualitative analysis of the possible asymptotic behaviors and the stability of steady states of system \eqref{ode1-u-m-M-I} through a bifurcation diagram with respect to degradation rate of monomers and the concentration of interleukins at equilibrium. For these simulations we rely on the parameter values given in Table \ref{Table 2.}.

\begin{table}[!ht]
    \centering
    \caption{Parameter values for the numerical simulations of Equation \eqref{ode1-u-m-M-I}.}
    \begin{tabular}{|l|l|l|l|}
    \hline
        Parameter &  Value & Units & Description\\ \hline
        $r_1$ & $10^{-1}$ & $L\,(\text{mol})^{-1}(\text{months})^{-1}$ & Bi-monomeric polymerization rate\\ \hline
        $r_2$ & $10^{-1}$ & $L\,(\text{mol})^{-1}(\text{months})^{-1}$ & Polymerization rate of monomers attaching to oligomers\\ \hline
        $d$ & Variable & $(\text{month})^{-1}$ & Degradation rate of monomers\\ \hline
        $\gamma_0$ & $5\times 10^{-2}$ & $(\text{month})^{-1}$ & Recruitment rate of oligomers to the amyloid plaques\\ \hline
        $\tau_1$ & 1 & $L\,(\textrm{mol})^{-1}(\text{months})^{-1}$  &  Growth coefficient of interleukins\\ \hline
        $\tau_2$ & 1 & $L\,(\textrm{mol})^{-1}$& Growth coefficient of interleukins\\  \hline
        $\tau_3$ & $1$ & $(\text{months})^{-1}$ & Degradation rate of interleukins\\ \hline
        $\tau_p$ & $3\times 10^{-2}$& $(\text{months})^{-1}$ & Degradation rate of oligomers in the amyloid plaques \\ \hline
        $\tau_S$ & 1 & $(\text{months})^{-1}$ & Coefficient of neural stress\\ \hline
        $C$ & 1 & $L^n\,(\text{mol})^{-n}$ & Coefficient of stress function\\ \hline
        $n$ & 2 & - & Power coefficient of stress function\\ \hline
        $\alpha_1$ & 1 & $L^2\,(\text{mol})^{-2}(\text{months})^{-1}$ & Growth coefficient of microglial cells\\ \hline
        $\alpha_2$ & 1 & $L\,(\text{mol})^{-1}$ & Growth coefficient of microglial cells \\ \hline
        $\lambda_M$ & $10^{-3}$ & $\text{mol}\,L^{-1}(\text{months})^{-1}$  & Rate of proliferation of microglial cells\\ \hline
        $\hat{M}$ & $1$ & $\text{mol}\,L^{-1}$ & Capacity of microglial cells\\ \hline
        $\sigma$ & $10^{-3}$ & $(\text{months})^{-1}$ & Degration rate of microglial cells\\ \hline
    \end{tabular}
      \caption*{The values are chosen with the order of magnitude between $10^{-3}$ and $1$, in the typical range of a biological process. The values can be re-scaled if needed, but the qualitative behavior is similar. In particular, we assumed that polymerization process of monomers is faster than the corresponding degradation of monomers and oligomers.}
    \label{Table 2.}
\end{table}

From the previous analysis of Section \ref{bimonomeric}, the key parameters in determining the existence of positive steady states where the disease persists are the degradation rates. Obtaining the precise value of all parameters is in general a complicated task. However, since we are interested in the qualitative behavior of system \eqref{ode1-u-m-M-I}, modifying the values in Table \ref{Table 2.} leads essentially to the same type of results.

\subsection{Effect of inflammation}
\label{inflammation}

The results of Theorem \ref{existence} motivates the analysis of the steady states in function of the degradation rate of monomers $d$. In particular we are interested in the inflammation processes that leads to the persistence of AD. In this context we analyze the bifurcation diagram for the concentration of interleukins at equilibrium $I^*$ depending on the degradation rate of monomers $d$ as the bifurcation parameter. The rest of the components of a steady state of Equation \eqref{ode1-u-m-M-I} are calculated according the system \eqref{ode1-u-m-M-I-est}.

We observe in Figure \ref{fig:bif-I-d} that for all $d>0$ the disease-free equilibrium is asymptotically stable. Moreover there exists a critical degradation rate of monomers $d_c\approx 0.4779\:(\textrm{months})^{-1}$, which we call the \textbf{critical degradation rate of persistence}, such that for $d<d_c$ there exists two positive steady states where the maximal one is asymptotically stable and the other one is linearly unstable. If $d>d_c$ then the disease-free steady state is the only equilibrium of the system.

\begin{figure}[!ht]
\centering
\includegraphics[scale=0.5]{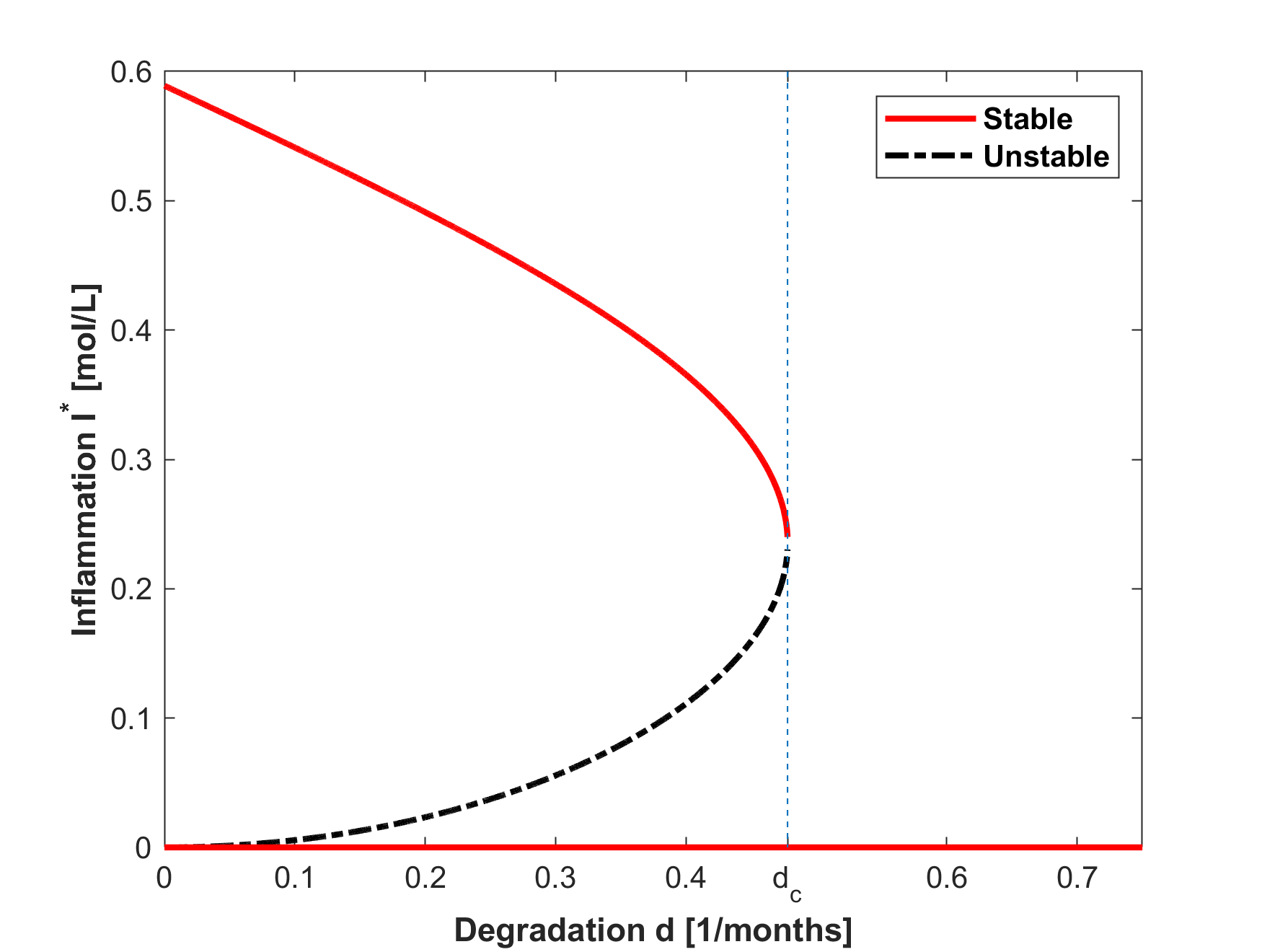}
\caption{Bifurcation diagram of the steady states for the concentration of interleukins $I^*$ in terms of the degradation rate of monomers $d$ with the parameters of Table \ref{Table 2.}. The disease-free equilibrium exists for all values of $d>0$ and it is stable. For $d<d_c$ we have other two non-trivial equilibria, where the maximal one is stable and the other one is unstable. For $d>d_c$ we get no positive steady states.}
\label{fig:bif-I-d}
\end{figure}

From the bifurcation diagram of Figure \ref{fig:bif-I-d}, we observe the importance of the degradation rate of monomers in determining the existence of steady states where AD persists. We also observe that for a small degradation rate $d$, the concentration of interleukins at equilibrium $I^*$ is large. 

The bifurcation analysis is quite challenging even for the simplified version of the model. Thus we proceed to numerical simulations in the next section in order to show the asymptotic behavior of solutions of system \eqref{ode1-u-m-M-I} under different degradation rates of monomers $d$ and initial data. In particular, we choose the initial values given in Table \ref{Table 3.}.

\begin{table}[!ht]
    \centering
    \caption{Initial data for the numerical simulations of Equation \eqref{ode1-u-m-M-I}.}
    \begin{tabular}{|l|l|l|l|}
    \hline
        Parameter &  Value & Units & Description\\ \hline
        $u_0$ & $10^{-4}$ & $L\,(\text{mol})^{-1}$ & Concentration of free oligomers\\ \hline
        ${u_p}_0$ & $0$ & $L\,(\text{mol})^{-1}$ & Concentration of oligomers in the amyloid plaques\\ \hline
        $m_0$ & $10^{-3}$ & $L\,(\text{mol})^{-1}$ & Concentration of monomers\\ \hline
        $M_0$ & $1$ & $L\,(\text{mol})^{-1}$ & Concentration of microgial cells\\ \hline
        $I_0$ & Variable & $L\,(\text{mol})^{-1}$ & Concentration of interleukins\\ \hline
    \end{tabular}
    \label{Table 3.}
\end{table}

This means that we study system \eqref{ode1-u-m-M-I} under a small initial concentration of monomers and free oligomers. We also consider that oligomers in the amyloid plaques are initially absent, while microglial cells are already developed. We vary the initial concentration of interleukins $I_0$ and the degration rate of monomers $d$ to study the asymptotic behavior of Equation \eqref{ode1-u-m-M-I}. 

In Figure \ref{fig:critical} we present the possible asymptotic behaviors of system \eqref{ode1-u-m-M-I} in terms of the degradation rate of monomers $d$ and the initial inflammation $I_0$ with the parameters in Table \ref{Table 2.} and initial data in Table \ref{Table 3.}.

\begin{figure}[!ht]
\centering
\includegraphics[scale=0.5]{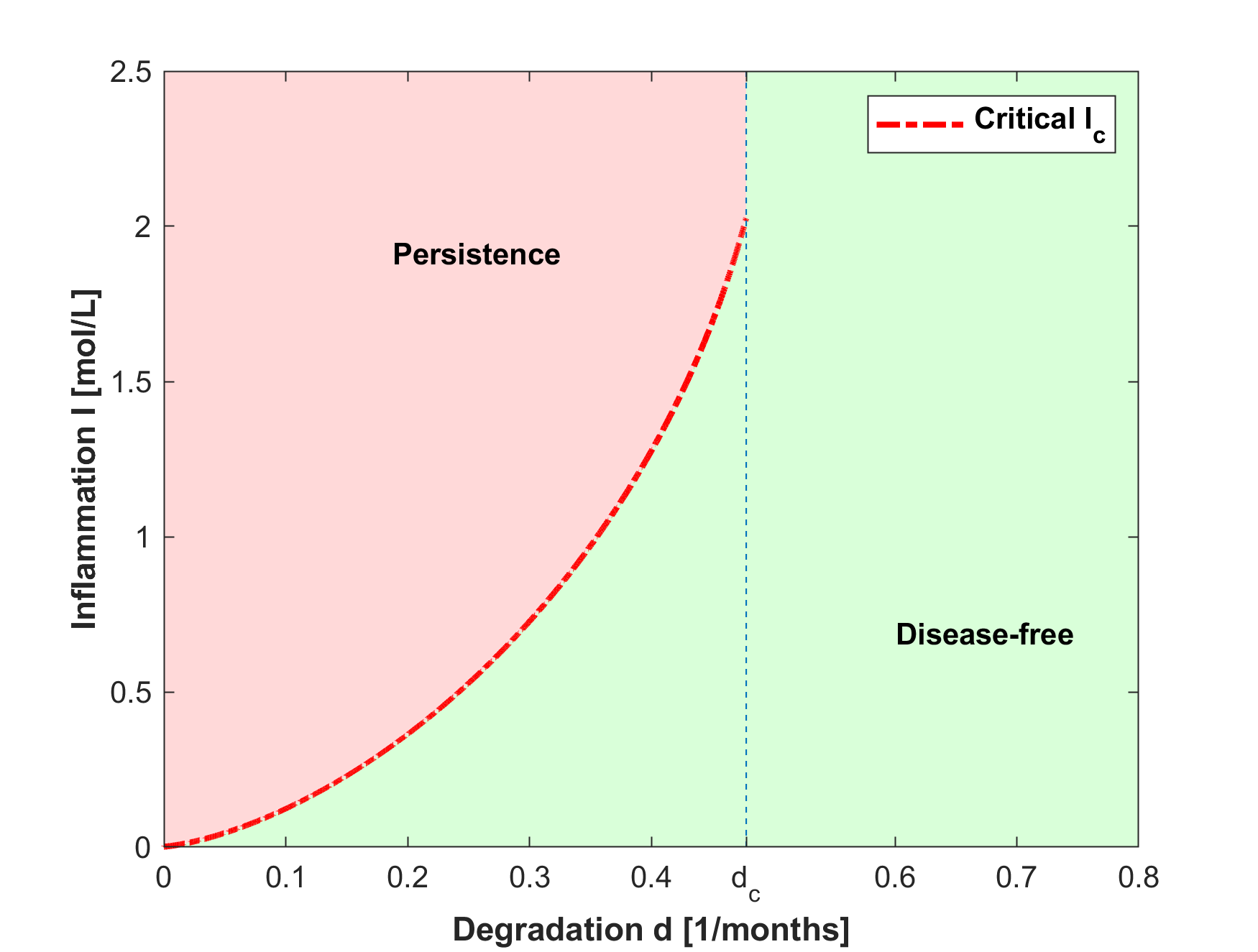}
\caption{Asymptotic behavior of solutions in terms of degradation rate of monomers $d$ and the initial inflammation $I_0$. For the parameter values from Table \ref{Table 2.} and initial data from Table \ref{Table 3.} we get the critical threshold of inflammation $I_c$ and the critical degradation rate $d_c$. For $d<d_c$ we get that AD persists for $I_0>I_c$ and does not persist if $I_0<I_c$. If $d>d_c$ the disease does not persist.}
\label{fig:critical}
\end{figure}

In particular we observe a phenomenon of hysteresis for $d<d_c$, where $d_c$ is the critical degradation rate in Figure \ref{fig:bif-I-d}, which implies the existence of a critical threshold value for the inflammation $I_c>0$ (depending on the rest of parameters and the initial data), that determine if AD persists or not. We observe in Figure \ref{fig:critical} that for degradation rates of monomers satisfying $d<d_c$, solutions of Equation \eqref{ode1-u-m-M-I} converge to the disease-free equilibrium when $I_0<I_c$ and converge to the positive stable equilibrium when $I_0>I_c$.

Moreover, for small values of $d$ a small initial concentration of interleukins $I_0$ suffices for the persistence of AD, while for values close the critical degradation rate of persistence $d_c$, a higher initial concentration of $I_0$ is needed. Furthermore, for $d<d_c$ most of solutions converge either to the disease-free equilibrium or the stable positive equilibrium. This global stability result is to be proven in a future work. When $d>d_c$, in absence of positive steady states, we conjecture that all the solutions of system \eqref{ode1-u-m-M-I} converge to the disease-free equilibrium.    

Next, we show some numerical simulations of solutions of the simplified system \eqref{ode1-u-m-M-I} in order to illustrate the effects of hysteresis and inflammation processes in the convergence to a steady state.

For a small degradation rate, $d=0.15\:(\textrm{months})^{-1}$ we observe from Figure \ref{fig:bif-I-d} that we have three steady states and by choosing $I_0=0.15\:\textrm{mol/L}$ we observe in Figure \ref{Simulation_1} that the solution converges to the disease-free equilibrium. In this example the concentration of interleukins is decreasing and the threshold of inflammation is not reached. Moreover, the concentrations of monomers increases until it reaches the maximum value and eventually decreases and the concentrations of free oligomers and oligomers in the amyloid plaques remain relatively low.

\begin{figure}[!ht]
    \centering
    \includegraphics[scale=0.12]{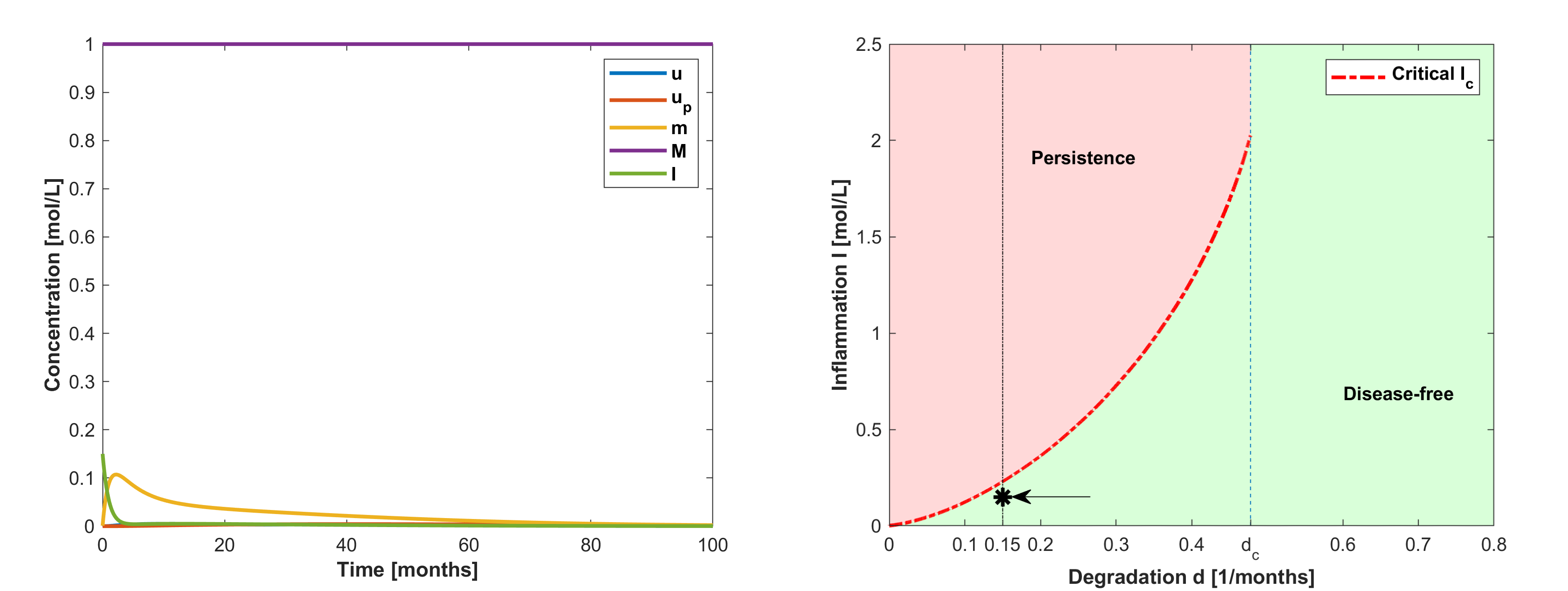}
    \caption{Example 1. (Left) Numerical solution of system \eqref{ode1-u-m-M-I} with $d=0.15\:(\textrm{months})^{-1}$ and $I_0=0.15\:\textrm{mol/L}$. The parameters correspond to those in Table \ref{Table 2.} and the initial data in Table \ref{Table 3.}. (Right) Asymptotic behavior in terms of degradation rate of monomers $d$ and the initial inflammation $I_0$. The value of $I_0$ is indicated with an arrow and $d$ by a vertical line.}
    \label{Simulation_1}
\end{figure}

If we increase the value of initial inflammation to $I_0=0.4\:\textrm{mol/L}$ in Figure \ref{Simulation_2} the solution converges to the stable positive steady state, since the critical threshold value $I_c$ is less than $I_0$. In this example the concentration of free oligomers and oligomers in the amyloid plaques are increasing towards the corresponding values of equilibrium. Inflammation is initially decreasing until it reaches the minimum value and eventually increases towards the equilibrium value, while the concentration of monomers has increasing and decreasing phases due the effect of stress mechanisms, nucleation and degradation.

\begin{figure}[!ht]
    \centering
    \includegraphics[scale=0.12]{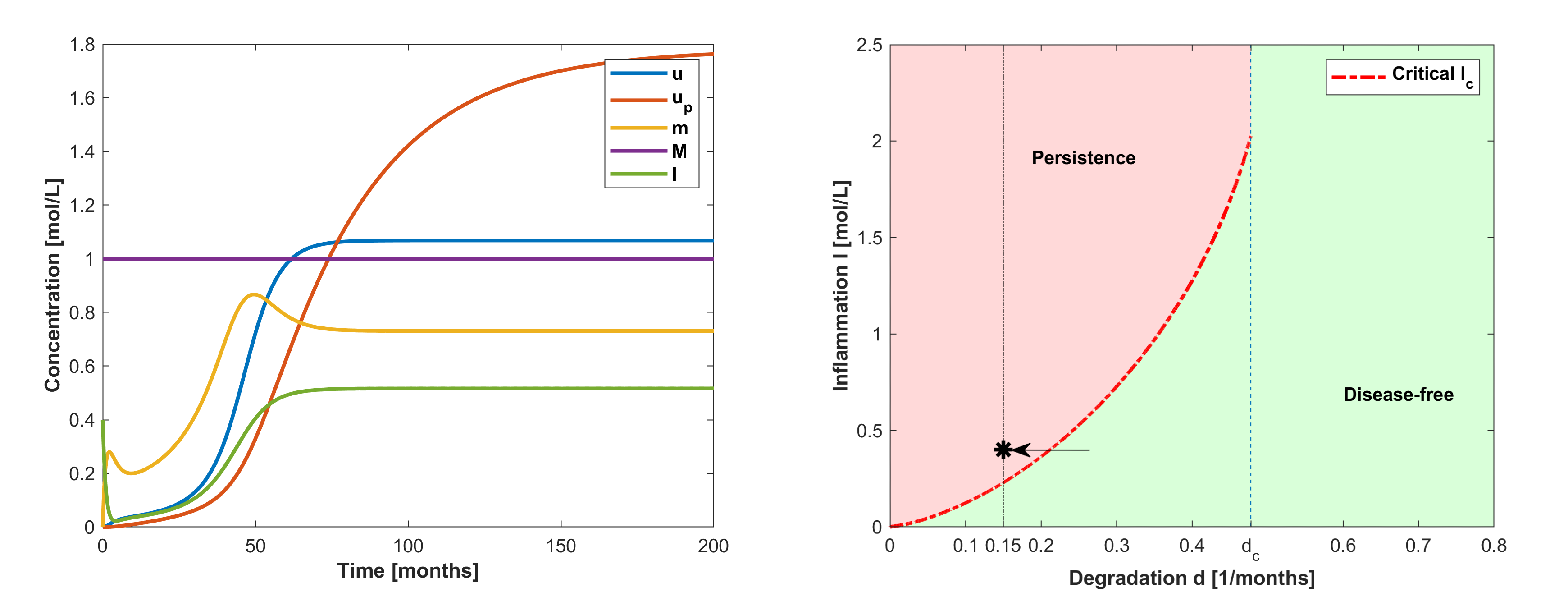}
    \caption{Example 2. (Left) Numerical solution of system \eqref{ode1-u-m-M-I} with $d=0.15\:(\textrm{months})^{-1}$ and $I_0=0.4\:\textrm{mol/L}$. The parameters correspond to those in Table \ref{Table 2.} and the initial data in Table \ref{Table 3.}. (Right) Asymptotic behavior in terms of degradation rate of monomers $d$ and the initial inflammation $I_0$. The value of $I_0$ is indicated with an arrow and $d$ by a vertical line.}
    \label{Simulation_2}
\end{figure}

In a similar way for a larger degradation rate $d=0.35\:(\textrm{months})^{-1}$, we have also three steady states according to Figure \eqref{fig:bif-I-d}. For $I_0=0.97\:\textrm{mol/L}$ we observe in Figure \ref{Simulation_3} that the solution converges to the disease-free equilibrium. In this example the concentration of interleukins is eventually decreasing, since the threshold of inflammation is not reached. Moreover the concentration of monomers, free oligomers and oligomers in the amyloid plaques increase until they reach their corresponding maximum values and eventually decrease. In particular the maxima are higher compared to those observed in Figure \ref{Simulation_1}.

\begin{figure}[!ht]
    \centering
    \includegraphics[scale=0.12]{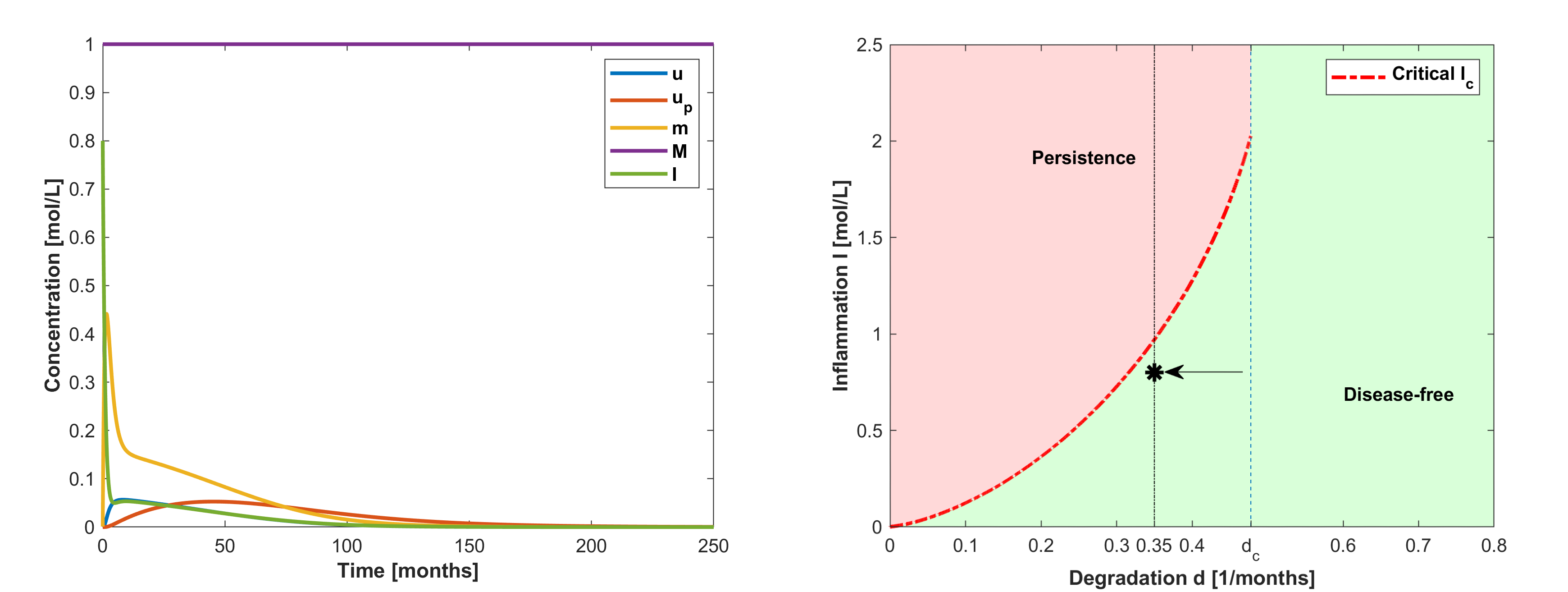}
    \caption{Example 3. (Left) Numerical solution of system \eqref{ode1-u-m-M-I} with $d=0.35\:(\textrm{months})^{-1}$ and $I_0=0.8\:\textrm{mol/L}$. The parameters correspond to those in Table \ref{Table 2.} and the initial data in Table \ref{Table 3.}. (Right) Asymptotic behavior in terms of degradation rate of monomers $d$ and the initial inflammation $I_0$. The value of $I_0$ is indicated with an arrow and $d$ by a vertical line.}
    \label{Simulation_3}
\end{figure}

For $I_0=0.98\:\textrm{mol/L}$ the solution converges to the positive stable steady state in Figure \ref{Simulation_4}, leading to the persistence of AD since the critical threshold value $I_c$ is less than $I_0$. Similarly to Figure \ref{Simulation_2}, the concentration of free oligomers and oligomers in the amyloid plaques are increasing towards the corresponding values of equilibrium. Inflammation is initially decreasing until it reaches the minimum value and eventually increases towards the equilibrium value, while the concentration of monomers has increasing and decreasing phases due the effect of stress mechanisms, nucleation and degradation. Moreover we observe that equilibrium values are lower to those observed in Figure \ref{Simulation_2} since the degradation rate of monomers is higher.

\begin{figure}[h!]
    \centering
    \includegraphics[scale=0.12]{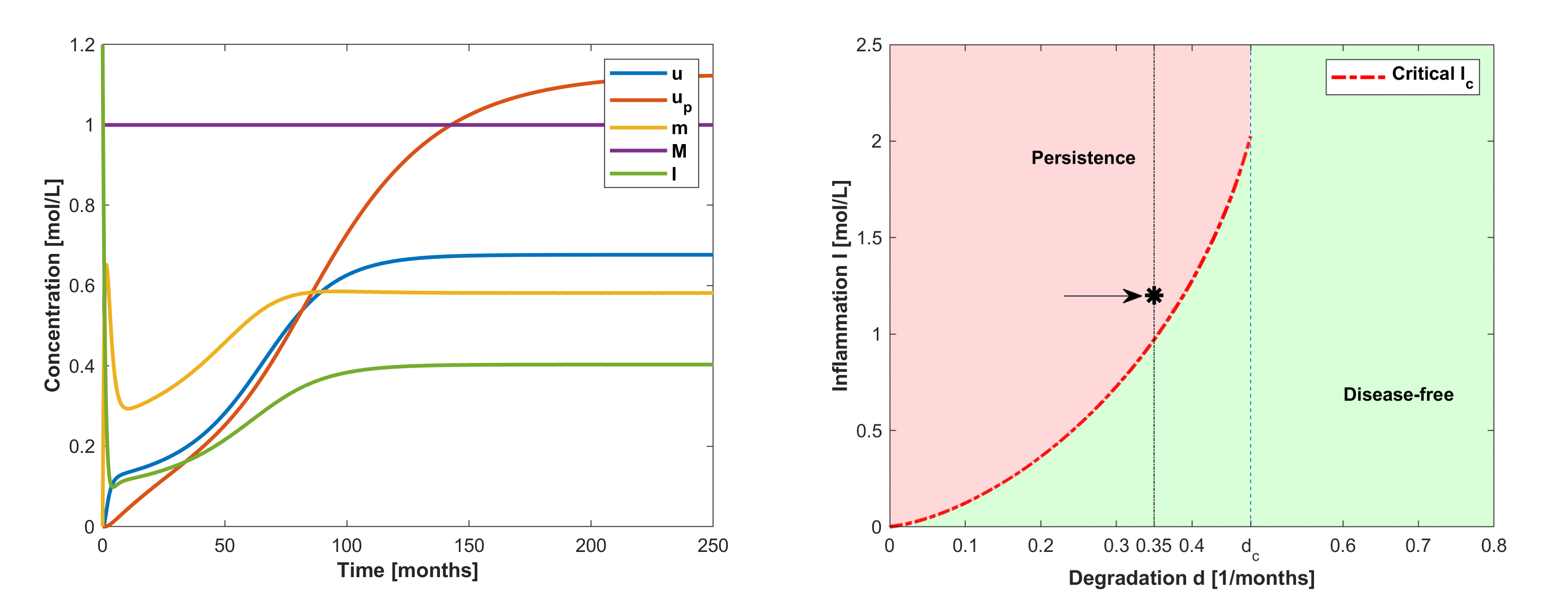}
    \caption{Example 4. (Left) Numerical solution of system \eqref{ode1-u-m-M-I} with $d=0.35\:(\textrm{months})^{-1}$ and $I_0=1.2\:\textrm{mol/L}$. The parameters correspond to those in Table \ref{Table 2.} and the initial data in Table \ref{Table 3.}. (Right) Asymptotic behavior in terms of degradation rate of monomers $d$ and the initial inflammation $I_0$. The value of $I_0$ is indicated with an arrow and $d$ by a vertical line.}
    \label{Simulation_4}
\end{figure}

Finally, for $d=0.55\:(\textrm{months})^{-1}$ we get only the trivial steady state according to Figure \ref{fig:bif-I-d}, so that for $I_0=2\:\textrm{mol/L}$ the solution converges to the disease-free equilibrium as we see in Figure \ref{Simulation_5}. The behavior of concentrations is similar to that in Figure \ref{Simulation_3}.

\begin{figure}[!ht]
\centering
\includegraphics[scale=0.12]{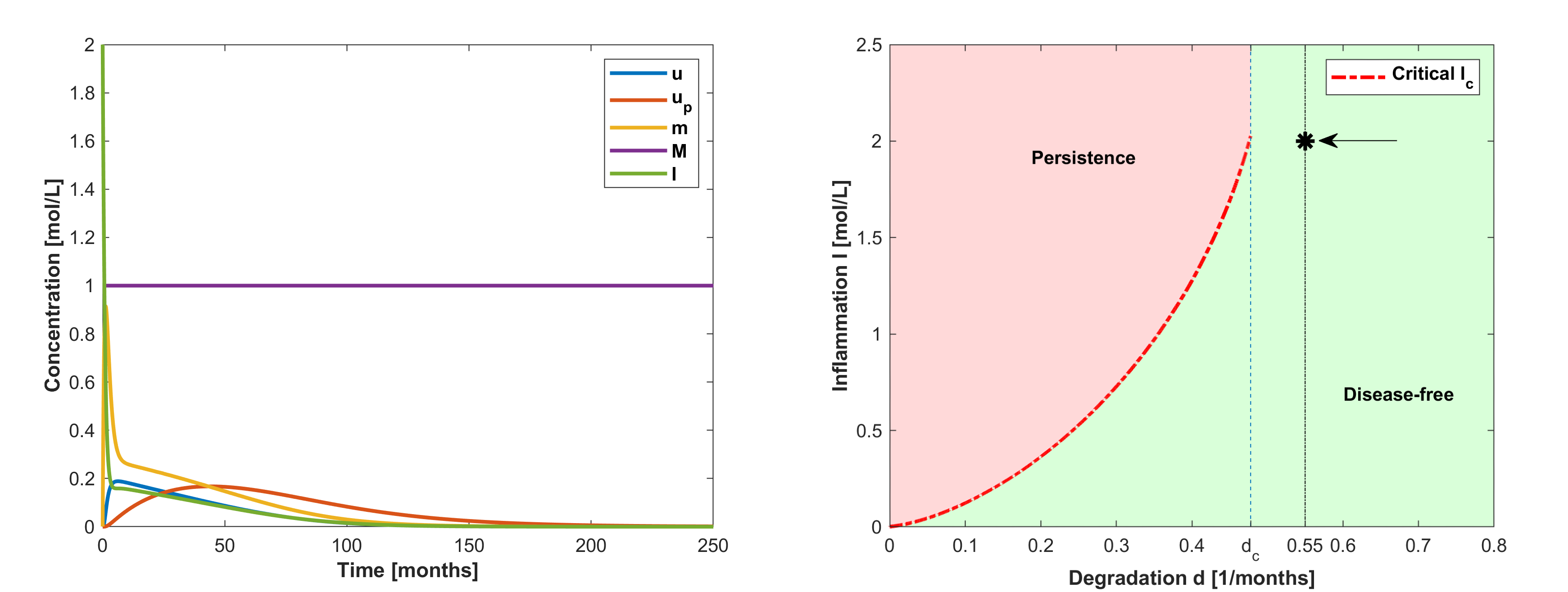}
\caption{Example 5. (Left) Numerical simulations of system \eqref{ode1-u-m-M-I} for the parameters in Table \ref{Table 2.} and the initial data in Table \ref{Table 3.} with $d=0.55\:(\textrm{months})^{-1}$ and $I_0=2\:\textrm{mol/L}$. (Right) Asymptotic behavior in terms of degradation rate of monomers $d$ and the initial inflammation $I_0$. The value of $I_0$ is indicated with an arrow and $d$ by a vertical line.}
\label{Simulation_5}
\end{figure}

In this bi-stable case, the solutions of system \eqref{ode1-u-m-M-I} converge to the positive stationary equilibrium if the initial concentrations of interleukins, monomers or free oligomers are sufficiently large. The phenomenon of hysteresis could indicate that AD can be initiated by the inflammation.

\subsection{Effect of monomer concentration}

Similarly to the analysis of inflammation in the persistence of AD, we study the effect of the initial concentration of monomers. In this context we present some numerical simulations to illustrate the same hysteresis phenomenon with respect to the initial concentration of monomers. We choose the initial values given in Table \ref{Table 4.}.

\begin{table}[!ht]
\centering
    \caption{Initial data for the numerical simulations of Equation \eqref{ode1-u-m-M-I}.}
    \begin{tabular}{|l|l|l|l|}
    \hline
        Parameter &  Value & Units & Description\\ \hline
        $u_0$ & 0 & $L\,(\text{mol})^{-1}$ & Concentration of free oligomers\\ \hline
        ${u_p}_0$ & $0$ & $L\,(\text{mol})^{-1}$ & Concentration of oligomers in the plaques\\ \hline
        $m_0$ & Variable & $L\,(\text{mol})^{-1}$ & Concentration of monomers\\ \hline
        $M_0$ & $1$ & $L\,(\text{mol})^{-1}$ & Concentration of microgial cells\\ \hline
        $I_0$ & $0$ & $L\,(\text{mol})^{-1}$ & Concentration of interleukins\\ \hline
    \end{tabular}
    \label{Table 4.}
\end{table}

This means that we study system \eqref{ode1-u-m-M-I} under a given concentration of oligomers while free oligomer, oligomers in the plaques and interleukins are initially absent. As in the previous analysis of Subsection \ref{inflammation} we assume that microglial cells are already developed. We vary the initial concentration of interleukins $m_0$ and the rate of monomers $d$ to show the asymptotic behavior of Equation \eqref{ode1-u-m-M-I}. We remark that similar results are obtained if we take a positive initial concentration of free oligomers and monomers are initially absent.

In Figure \ref{fig:critical_m} we present the possible asymptotic behaviors of system \eqref{ode1-u-m-M-I} in terms of the degradation rate of monomers $d$ and the initial concentration of monomers $m_0$, with the parameters of Table \ref{Table 2.} and initial data of Table \ref{Table 4.}, following the same analysis presented in Figure \ref{fig:critical}.

\begin{figure}[!ht]
\centering
\includegraphics[scale=0.5]{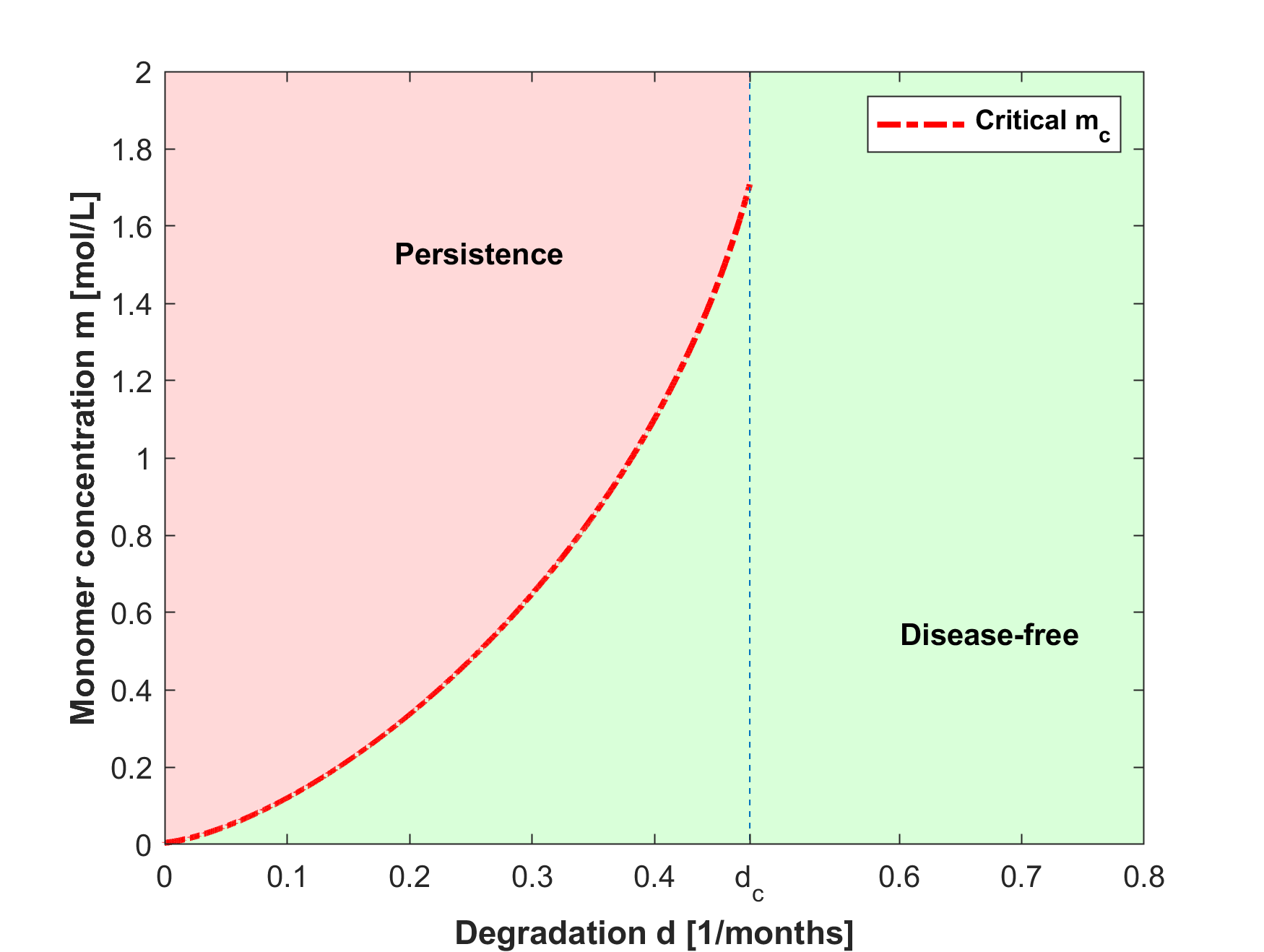}
\caption{Asymptotic behavior of solutions in terms of degradation rate of monomers $d$ and the initial concentration of monomers $m_0$. For the parameter values of Table \ref{Table 2.} and initial data of Table \ref{Table 4.} we get the critical threshold of monomer concentration $m_c$ and the critical degradation rate $d_c$. For $d<d_c$ we get that AD persists for $m_0>m_c$ and does not persist if $m_0<m_c$. If $d>d_c$ the disease does not persist.}
\label{fig:critical_m}
\end{figure}

Similarly to the previous Subsection \ref{inflammation}, we observe the same phenomenon of hysteresis for $d<d_c$, where $d_c$ is the critical degradation rate in Figure \ref{fig:bif-I-d}, which implies the existence of the respective critical threshold value for the initial concentration of monomers $m_c>0$ (depending on the rest of parameters and the initial data), that determine if AD persists or not. We observe in Figure \ref{fig:critical_m} that for degradation rates of monomers satisfying $d<d_c$, solutions of Equation \eqref{ode1-u-m-M-I} converge to the disease-free equilibrium when $m_0<m_c$ and converge to the positive stable equilibrium when $m_0>m_c$.

For $d=0.35\:(\textrm{months})^{-1}$ and $m_0=0.7\:\textrm{mol/L}$, we observe in Figure \ref{Simulation_6} that the solution converges to the disease-free equilibrium. In this example the concentration of monomers is decreasing (contrary to the case of the interleukins in the previous subsection), due to its intrinsic degradation rate and the formation of free oligomers. Moreover, the concentrations of free oligomers, oligomers in the amyloid plaques and interleukins increase until they reach their corresponding maximum values and eventually decrease in the same way as in the previous examples.

\begin{figure}[!ht]
    \centering
    \includegraphics[scale=0.12]{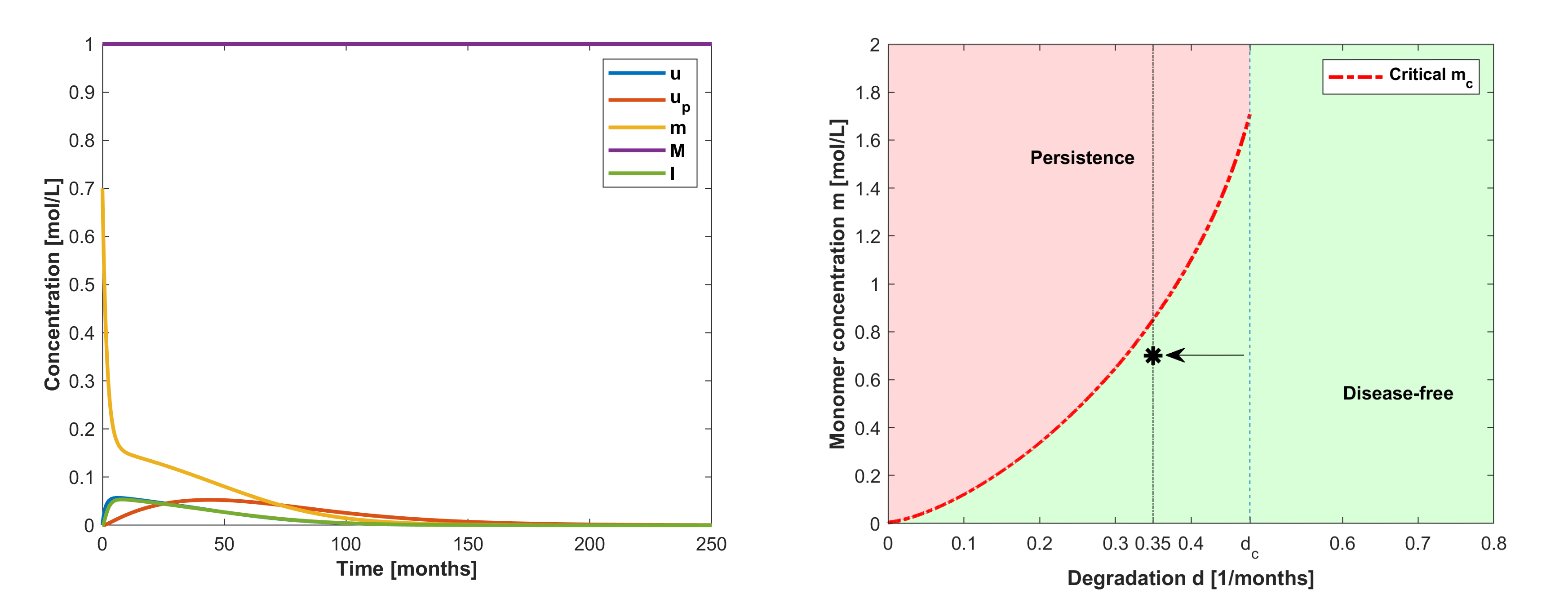}
    \caption{Example 6. (Left) Numerical solution of system \eqref{ode1-u-m-M-I} with $d=0.35\:(\textrm{months})^{-1}$ and $m_0=0.7\:\textrm{mol/L}$. The parameters correspond to those in Table \ref{Table 2.} and the initial data in Table \ref{Table 4.}. (Right) Asymptotic behavior in terms of degradation rate of monomers $d$ and the initial concentration of monomers $m_0$. The value of $m_0$ is indicated with an arrow and $d$ by a vertical line.}
    \label{Simulation_6}
\end{figure}

For the same value of the degradation rate $d$ and $I_0=1\:\textrm{mol/L}$ we observe in Figure \ref{Simulation_7} that the solution converges to the positive stable steady state, since the critical threshold value $m_c$ is less than $m_0$. In this example the concentration of free oligomers, oligomers in the amyloid plaques and interleukins are increasing towards the corresponding values of equilibrium. The monomers is initially decreasing until it reaches the minimum value and eventually increases towards the equilibrium value.

\begin{figure}[!ht]
    \centering
    \includegraphics[scale=0.12]{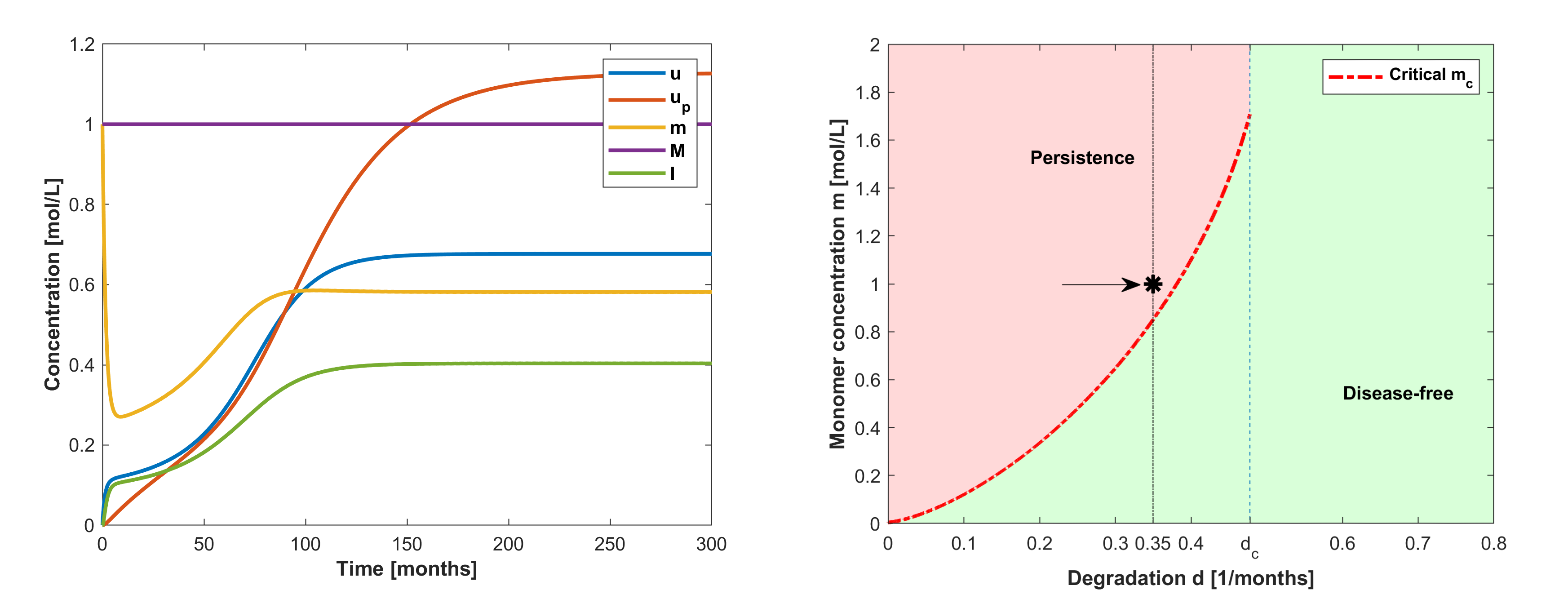}
    \caption{Example 7. (Left) Numerical solution of system \eqref{ode1-u-m-M-I} with $d=0.35\:(\textrm{months})^{-1}$ and $m_0=1\:\textrm{mol/L}$. The parameters correspond to those in Table \ref{Table 2.} and the initial data in Table \ref{Table 4.}. (Right) Asymptotic behavior in terms of degradation rate of monomers $d$ and the initial concentration of monomers $m_0$. The value of $m_0$ is indicated with an arrow and $d$ by a vertical line.}
    \label{Simulation_7}
\end{figure}

From Figures \ref{Simulation_6} and \ref{Simulation_7}, we observe essentially the same phenomenon of hysteresis and asymptotic behavior as for the simulations in Subsection \ref{inflammation}.
 
\section{Discussion and perspectives}
From the previous numerical simulations of the bi-monomeric model \eqref{ode1-u-m-M-I} in Section \ref{numericalsim}, and even if it corresponds to a simplified version of the original model, we already get a first qualitative approach in understanding the influence of inflammation and the degradation rates in the persistence of AD through a phenomenon of hysteresis, which determines the asymptotic behavior of solutions of system \eqref{ode1-u-m-M-I} through a critical threshold for the inflammation in terms of the parameters and the initial data. This qualitative analysis suggest that AD may be triggered by an initial high concentration of interleukins and its progression could be mitigated if an efficacious anti-inflammatory treatment would be applied in an early stage of disease, as it is suggested in \cite{imbimbo2010nsaids}. Furthermore, an interesting approach might be the study the effective times of applying anti-inflammatory doses in order to complement the stress mechanism given by the UPR in lowering the production of A$\beta$-monomers and not interfering with microglia activation cycles that counteracts the excess of toxic amyloid.

In this context, a possible extension of this study relies on modeling of such treatments via an impulsive differential equation for the concentration of interleukins $I$ (see \cite{lakshmikantham1989theory,samoilenko1995impulsive} for a reference on this type of differential equations). This could lead to interesting optimal control problems in order to optimize both time and quantity of dose provided to mitigate AD, inspired in the works of Hu et al. \cite{hu2022finite}. Moreover, another important extension to the presented model is the incorporation of cell destruction due to the accumulation of oligomers in the amyloid plaques. In particular, the stress function \eqref{stress} will also depend on neural population.

Concerning dynamics of the full model incorporating the spatial dependence, the chemotaxis of microglial cells and the whole polymerization process of proto-oligomers are far from being fully understood. For the whole and complete model, we expect a similar phenomenon of hysteresis to the one observed in the spatial-homogeneous simplified model, though the analysis to prove existence of steady states becomes way more challenging.

\section*{Acknowledgments}
This project has received support from Agence National de la Recherche PrionDiff Project-ANR-21-CE15-0011.

\bibliography{Alzheimer.bib}
\bibliographystyle{plos2015}

\end{document}